\documentclass[lettersize,journal]{IEEEtran}

\usepackage{amsmath,amssymb,amsfonts}
\usepackage{amsthm}
\usepackage{bm}

\usepackage{algorithm}
\usepackage{algorithmic}

\usepackage{graphicx}
\usepackage[caption=false,font=normalsize,labelfont=sf,textfont=sf]{subfig}

\usepackage{array}
\usepackage{booktabs}
\usepackage{makecell}
\usepackage{diagbox}
\usepackage{multirow}
\usepackage{tabularx}
\usepackage[table]{xcolor}

\usepackage{textcomp}
\usepackage{stfloats}
\usepackage{url}
\usepackage{verbatim}
\usepackage{cite}
\usepackage{float}
\usepackage{balance}

\usepackage[bookmarks=false,hidelinks]{hyperref}

\def\BibTeX{{\rm B\kern-.05em{\sc i\kern-.025em b}\kern-.08em
    T\kern-.1667em\lower.7ex\hbox{E}\kern-.125emX}}

\newtheoremstyle{mynonitalic}%
  {3pt}
  {3pt}
  {\normalfont}
  {}
  {\bfseries}
  {.}
  {.5em}
  {}

\theoremstyle{mynonitalic}
\newtheorem{theorem}{Theorem}
\newtheorem{proposition}{Proposition}
\newtheorem{lemma}{Lemma}

\newcolumntype{C}{>{\centering\arraybackslash}X}

\makeatletter
\def\fs@myruled{%
  \fs@ruled
  \def\@fs@pre{\vspace*{0.12in}\hrule height.8pt depth0pt \kern2pt}%
  \def\@fs@mid{\kern2pt\hrule\kern2pt}%
  \def\@fs@post{\kern2pt\hrule\relax}%
}
\makeatother

\floatstyle{myruled}
\restylefloat{algorithm}

\begin{document}
\title{Parallel Successive Cancellation Perturbation-Enhanced Decoding of Polar Codes via Offline Variance Design}
\author{Changwei Tu, Xuanyu Li and Kai Niu
\thanks{This work is supported by the National Natural Science Foundation of China under Grant 62321001 and Grant 62471054.
  \emph{(Corresponding author: Kai Niu.)}
  
The authors are with the Key Laboratory of Universal Wireless Communications, Ministry of Education,
Beijing University of Posts and Telecommunications, Beijing, China.
Email: \{tuchangwei, lixuanyu, niukai\}@bupt.edu.cn.
}

  }

\markboth{Journal of \LaTeX\ Class Files,~Vol.~18, No.~9, September~2020}%
{How to Use the IEEEtran \LaTeX \ Templates}

\maketitle

\begin{abstract}
Successive cancellation perturbation-enhanced (SCP) decoding improves the performance of finite-length polar codes by performing multiple SC decoding attempts with receiver-side perturbations. However, many existing perturbation schemes generate or update subsequent perturbations according to the outcomes of previous decoding attempts, resulting in additional decoding latency.
In this paper, we propose an offline variance design (OVD) method for parallel SCP (PSCP) decoding of short- and medium-length polar codes. 
First, we formulate the exact recovery objective conditioned on ordinary SC failure and classify failed frames by the position of
the first genie-aided intrinsic error and the number of subsequent intrinsic errors.
We also derive a consistent Gaussian representation of the perturbed channel that preserves min-sum SC hard decisions.
Second, we construct a class-based approximation of the recovery objective using Gaussian approximation and backward recursions,
accounting for both error correction and new errors introduced by perturbations.
We prove that both the exact and analytical objectives are nondecreasing and exhibit diminishing marginal gains as independent branches are added.
Third, we develop a greedy algorithm to select variances from a finite candidate set for a given code, signal-to-noise ratio (SNR), and number of perturbation branches.
All variances are determined offline, allowing the original SC branch and all perturbation branches to start simultaneously.
Simulations for rate-$1/2$ polar codes of lengths $64$, $128$, $256$, and $512$ show that OVD-PSCP achieves lower block error rates (BLERs) than conventional SCP with the same number of perturbation branches. The gains are larger for shorter codes and increase as the number of perturbation branches grows.
\end{abstract}

\begin{IEEEkeywords}
Polar Codes, Parallel Perturbation Decoding, Offline Perturbation Variance Design.
\end{IEEEkeywords}

\section{Introduction}
Polar codes are the first class of error-correction codes with explicit construction that provably achieve the channel capacity, and have low encoding/decoding complexities \cite{b1}. Therefore, polar codes were selected as the fifth-generation (5G) mobile communication standard \cite{b2}.

In recent years, with the ongoing development of 6G standards, channel coding has been facing increasingly stringent requirements on ultra-reliable and low-latency communications (URLLC). To address the latency issue, \cite{b3} optimizes Arıkan’s successive cancellation (SC) decoding by exploiting special nodes.
By trading off a certain degree of subchannel reliability, the proposed approach can achieve a decoding throughput at the terabit-per-second (Tb/s) level.
However, for polar codes with short-to-moderate code lengths, polarization is insufficient, which limits the decoding performance of SC decoding.
To mitigate this limitation, 
researchers proposed successive cancellation list (SCL) decoding \cite{b4}, and further improved the decoding performance by concatenating a cyclic redundancy check (CRC) \cite{b5}. Nevertheless, these methods suffer from high decoding complexity.

To achieve a better trade-off between decoding performance and complexity,  \cite{b6} proposes adding artificial noise to the received  vector 
when SC decoding fails to facilitate decoding. This method is referred to SC perturbation-enhanced (SCP) decoding. 
Compared with other performance-enhancement decoding algorithms, SCP supports parallel decoding attempts at the receiver, thereby satisfying low-latency requirements.

Building upon this idea, \cite{b7} proposes a dynamic perturbation-enhanced scheme that increases the artificial-noise variance whenever duplicate decoded codewords occur at the receiver, thereby generating more candidate codewords. 
After that,  \cite{b8} demonstrates that, in the asymptotic regime, the first perturbation improves decoding performance with probability 1/2,
and further establishes an equivalence between channel-side perturbations and source-side perturbations. Moreover, \cite{b9} combines perturbation decoding with bit-flipping decoding to further improve decoding performance. For longer polar codes, \cite{b10} proposes a dedicated hybrid perturbation decoding scheme, which achieves decoding performance close to CRC-aided SCL (CA-SCL) decoding. More recently, \cite{b11} proposes an improved  perturbation-enhanced  decoder that 
perturbs only the a posteriori LLRs at selected unreliable positions,  reducing perturbation complexity while maintaining similar decoding performance.

Despite these advances, many subsequent improvements first identify a failed frame and then generate or update the perturbation attempts in 
sequence \cite{b7,b9,b10}. This serial control increases decoding  latency. Moreover, existing methods often use an empirically selected  variance or an online variance-update rule. 
The offline design of a multiset of perturbation variances for a fixed number of parallel branches has received limited attention.

To address these limitations, we propose an offline variance design (OVD) method for parallel successive cancellation perturbation-enhanced
(PSCP) decoding of short- and medium-length polar codes.
The resulting decoder is referred to as OVD-PSCP.
For a given code, operating signal-to-noise ratio (SNR), and number $T$ of perturbation branches, OVD selects $T$ perturbation variances
from a finite candidate set.
The selection is based on an analytical approximation of the recovery probability conditioned on ordinary SC decoding failure. 
Since all variances are determined offline, the original SC branch and all perturbation branches can start simultaneously.

The main contributions of this paper are summarized as follows.

\begin{itemize}

\item
We formulate the exact frame-level recovery objective for PSCP decoding.
Since this objective is difficult to evaluate directly, we classify SC-failed frames by the position of the first genie-aided intrinsic
error and the number of subsequent intrinsic errors.
To characterize the effect of perturbation, we derive a consistent Gaussian representation that is hard-decision equivalent to the
perturbed channel under min-sum SC decoding.

\item
We develop an analytical recovery model for PSCP decoding. Based on the decision-equivalent Gaussian representation, we use
Gaussian approximation, local repair and damage probabilities, and backward recursions to estimate the class priors and single-branch recovery probabilities.
These probabilities are then combined to construct a tractable analytical approximation of the multi-branch recovery objective.
We further prove that both the exact and analytical objectives are nondecreasing and exhibit diminishing marginal gains as independent
perturbation branches are added.

\item
Based on the analytical objective, we propose a greedy offline variance design algorithm for PSCP decoding with a fixed number of perturbation branches.
We first construct a finite candidate variance set from the local perturbation model.
The algorithm then selects the candidate variance with the largest marginal gain at each step, allowing the same variance to be selected more than once.
The selected variances are then used for online PSCP decoding, in which the original SC branch and all perturbation branches can start simultaneously.
We also analyze the computational complexity and memory requirements of the offline design.

\end{itemize}

The remainder of this paper is organized as follows.
Section~II introduces polar codes, SC decoding, and PSCP decoding.
Section~III develops the analytical recovery model and its structural properties.
Section~IV presents the candidate variance set, the OVD-PSCP algorithm, and the offline design complexity.
Section~V reports the BLER performance and candidate codeword statistics.
Finally, Section~VI concludes the paper.

\section{Preliminaries}

\subsection{Polar Codes}
Let $\mathcal{P}(N,K+r)$ denote a polar code with length $N=2^n$ and $K+r$ non-frozen bits, where $K$ is the number of information bits and $r$ is the CRC length. Let $M=K+r$ denote the number of non-frozen bits. After channel polarization, the $M$ most reliable subchannels are selected as non-frozen positions, while the remaining $N-M$ subchannels are assigned to frozen bits. Let $\mathcal{I}$ and $\mathcal{F}$ denote the sets of non-frozen and frozen bit positions, respectively, where $|\mathcal{I}|=M$, $|\mathcal{F}|=N-M$, and $\mathcal{F}=\{0,1,\ldots,N-1\}\setminus\mathcal{I}$.

For convenience, the elements of $\mathcal{I}$ are arranged in ascending order as $\mathcal{I}=\{a_0,a_1,\ldots,a_{M-1}\}$, where $0\leq a_0<a_1<\cdots<a_{M-1}\leq N-1$. Let $\mathbf{u}=[u_0,u_1,\ldots,u_{N-1}]$ and $\mathbf{c}=[c_0,c_1,\ldots,c_{N-1}]$ denote the source-bit vector and the codeword vector, respectively. Then, the polar encoding process can be expressed as $\mathbf{c}=\mathbf{u}\mathbf{G}_N$, where $\mathbf{G}_N=\begin{bmatrix}1&0\\1&1\end{bmatrix}^{\otimes n}$. Since the CRC bits do not carry independent information, the effective rate of the CRC-aided polar code is defined as $R=K/N$.

The codeword bit $c_i\in\{0,1\}$ is modulated by binary phase-shift keying (BPSK) as $x_i=1-2c_i$. Then, the modulated symbol passes through an additive white Gaussian noise (AWGN) channel. The received symbol is given by $y_i=x_i+n_i$, where $\{n_i\}$ are i.i.d.\ Gaussian noise samples with $n_i\sim\mathcal{N}(0,\sigma^2)$.

\subsection{SC Decoding of Polar Codes}

Let $L(y_i)=\frac{2y_i}{\sigma^2}$ denote the channel LLR at index $i$, and let $L(u_i)$ denote the source-side LLR at index $i$.
For a length-$2$ polar code, SC decoding proceeds as follows:
The $f$-operation computes the decision LLR of $u_0$ as
\begin{equation}
\begin{split}
L(u_0)&=f\big(L(y_0),L(y_1)\big)\\
&=\mathrm{sgn}\!\big(L(y_0)\big)\,\mathrm{sgn}\!\big(L(y_1)\big)\
\!\min\!\big(|L(y_0)|,|L(y_1)|\big),
\end{split}
\end{equation}
and a hard decision on $L(u_0)$ yields the estimate $\hat{u}_0$. Based on the value of $\hat{u}_0$, the $g$-operation updates the decision LLR of $u_1$ as
\begin{equation}
L(u_1)=g(L(y_0),L(y_1),\hat{u}_0)=(-1)^{\hat{u}_0}L(y_0)+L(y_1).
\end{equation}
Finally, a hard decision on $L(u_1)$ yields the estimate $\hat{u}_1$.
The hard decision in the above steps is given by
\begin{equation}
\hat{u}_i=
\begin{cases}
1, & L(u_i)<0,\ \ i\in\mathcal{I},\\
0, & \text{otherwise}.
\end{cases}
\end{equation}
By recursively applying the $f$- and $g$-operations over the decoding tree, SC decoding produces $\hat{\mathbf{u}}=[\hat{u}_0,\hat{u}_1,\ldots,\hat{u}_{N-1}]$ for a length-$N$ polar code.

\subsection{Parallel SC Perturbation-Enhanced Decoding}

Based on the receiver-side SC perturbation-enhanced decoding scheme in \cite{b6}, we describe a PSCP decoder.
It contains one original SC branch and $T$ perturbation branches, and all $T+1$ branches are executed in parallel.

Let $V=\{v_0,v_1,\ldots,v_{T-1}\}$ denote the multiset of perturbation variances, where $v_t$ is the variance used by the $t$-th perturbation branch.
For $t=0,1,\ldots,T-1$, a Gaussian perturbation vector is added to the channel LLR vector, yielding
\begin{equation}
\mathbf{L}^{(t)}(p)
=
\mathbf{L}(y)+\mathbf{n}_p^{(t)}.
\label{eq:parallel_perturbed_llr}
\end{equation}
The $i$-th element of $\mathbf{n}_p^{(t)}$ satisfies $n_{p_i}^{(t)}\sim\mathcal{N}(0,v_t)$ for $i=0,1,\ldots,N-1$.
For each perturbation branch, the perturbation samples are i.i.d.\ across the LLR positions, and the perturbation vectors are
independent across different branches. Each perturbation branch applies SC decoding to its perturbed LLR vector, while the original SC branch directly uses the
channel LLR vector $\mathbf{L}(y)$ without perturbation.

The CRC check is applied to each decoded candidate. If at least one candidate passes the CRC check, a valid estimate is returned; otherwise, a decoding failure is declared.

\section{Performance Analysis of SC Perturbation-Enhanced Decoding}
\label{sec:recovery_model}

This section develops an analytical recovery model for offline variance design in PSCP decoding.
We first formulate the exact frame-level recovery objective conditioned on ordinary SC decoding failure.
Then, we divide SC-failed frames into genie-aided error classes and estimate the class priors and single-branch recovery probabilities.
These estimates are combined to construct a tractable analytical approximation of the multi-branch recovery objective.
Finally, we prove that both the exact and analytical objectives are nondecreasing and exhibit diminishing marginal gains as independent perturbation branches are added.

\subsection{Frame-Level Recovery Objective}

Let $\hat{\mathbf{u}}^{\mathrm{SC}}$ denote the source-vector estimate
produced by ordinary SC decoding, and define the SC failure event as
\begin{equation}
 \mathrm{Fail}_{\mathrm{SC}}
 \triangleq
 \left\{
 \hat{\mathbf{u}}^{\mathrm{SC}}
 \neq
 \mathbf{u}
 \right\}.
\label{eq:ovd_sc_failure}
\end{equation}
For a fixed failed frame $\omega\in\mathrm{Fail}_{\mathrm{SC}}$, let
$\hat{\mathbf{u}}(v)$ denote the output of one perturbation branch with
perturbation variance $v$. Its single-branch recovery probability is defined
as
\begin{equation}
 s_{\omega}(v)
 \triangleq
 \Pr\!\left(
 \hat{\mathbf{u}}(v)=\mathbf{u}
 \mid \omega
 \right).
\label{eq:ovd_single_branch}
\end{equation}

For the perturbation-variance multiset $V$ defined in Section~II, the perturbation vectors
are generated independently across branches. Therefore, the conditional
probability that at least one of the $T$ branches recovers the transmitted
source vector is
\begin{equation}
 P_{\omega}(V)
 \triangleq
 1-
 \prod_{v_t\in V}
 \left[1-s_{\omega}(v_t)\right].
\label{eq:ovd_fixed_frame_recovery}
\end{equation}
This probability depends on the selected variance values and their
multiplicities, but not on the order of the branches.

Averaging over ordinary-SC failed frames gives the exact frame-level
objective
\begin{equation}
 J(V)
 \triangleq
 \mathbb{E}_{\omega}\!\left[
 P_{\omega}(V)
 \mid
 \mathrm{Fail}_{\mathrm{SC}}
 \right].
\label{eq:ovd_true_objective}
\end{equation}
Thus, $J(V)$ is the probability that the transmitted source vector is
recovered by at least one perturbation branch, conditioned on an ordinary-SC
decoding failure.

\subsection{Genie-Aided Error Classification}
\label{subsec:error_classification}

We next partition the failed frames into error classes. For the $\ell$-th
non-frozen bit $u_{a_\ell}$, let $\hat{u}_{a_\ell}^{\mathrm{G}}$ denote the
estimate produced by the genie-aided SC decoder, which sets all preceding
non-frozen decisions to their transmitted values. We define
\begin{equation}
 X_\ell
 \triangleq
 \mathbf{1}\!\left\{
 \hat{u}_{a_\ell}^{\mathrm{G}}
 \neq
 u_{a_\ell}
 \right\},
 \qquad
 H
 \triangleq
 \sum_{\ell=0}^{M-1} X_\ell .
\label{eq:ovd_genie_indicator}
\end{equation}
Here, $X_\ell$ indicates whether an intrinsic error occurs at the $\ell$-th
non-frozen decision, and $H$ counts the total number of such errors. Since
the ordinary and genie-aided SC decoders use identical preceding decisions
before the first ordinary-SC error, an ordinary-SC decoding failure occurs
if and only if $H>0$.

For $0\leq\ell<M$, suppose that the first intrinsic error occurs at
$a_\ell$, and let $m$ denote the number of subsequent intrinsic errors.
Then, $0\leq m<M-\ell$ and $H=m+1$. Define
\begin{equation}
 \mathcal{E}_{\ell,m}
 \triangleq
 \left\{
 \begin{array}{l}
 X_s=0,\quad 0\leq s<\ell,\\[1mm]
 X_\ell=1,\\[1mm]
 \displaystyle\sum_{s=\ell+1}^{M-1}X_s=m
 \end{array}
 \right\}.
\label{eq:ovd_true_class}
\end{equation}
The valid class-index set is
\begin{equation}
 \mathcal{S}
 \triangleq
 \left\{
 (\ell,m):
 0\leq\ell<M,\;
 0\leq m<M-\ell
 \right\}.
\label{eq:ovd_class_set}
\end{equation}
The events
$\{\mathcal{E}_{\ell,m}:(\ell,m)\in\mathcal{S}\}$ are mutually exclusive
and form a partition of $\mathrm{Fail}_{\mathrm{SC}}$.

For each class, define
\begin{align}
 \pi_{\ell,m}
 &\triangleq
 \Pr\!\left(
 \mathcal{E}_{\ell,m}
 \mid
 \mathrm{Fail}_{\mathrm{SC}}
 \right),
\label{eq:ovd_true_class_prior}\\
 J_{\ell,m}(V)
 &\triangleq
 \mathbb{E}_{\omega}\!\left[
 P_{\omega}(V)
 \mid
 \mathcal{E}_{\ell,m}
 \right].
\label{eq:ovd_true_class_coverage}
\end{align}
Since the error classes form a partition of the SC failure event, the frame-level objective can be written as
\begin{equation}
 J(V)
 =
 \sum_{(\ell,m)\in\mathcal{S}}
 \pi_{\ell,m}J_{\ell,m}(V).
\label{eq:ovd_true_decomposition}
\end{equation}
However, the class priors and class-wise recovery probabilities are generally not available in closed form. Therefore, we develop an analytical model below.

\subsection{Offline Analytical Recovery Model}
\label{subsec:analytical_recovery}

\subsubsection{Gaussian Approximation and Decision-Equivalent Channel}

We use the Gaussian approximation (GA) in \cite{b12} to model the
marginal distributions of genie-aided decision LLRs. By symmetry, the
all-zero codeword is assumed to be transmitted.

Assume $\mu_{\mathrm{ch}}>0$. The unperturbed channel LLR follows
\begin{equation}
 L_{\mathrm{ch}}
 \sim \mathcal{N}(\mu_{\mathrm{ch}},2\mu_{\mathrm{ch}}).
\label{eq:ovd_consistent_channel}
\end{equation}
Let $\mathcal{G}_\ell(\cdot)$ denote the GA mapping from the channel-LLR
mean to the decision-LLR mean for the non-frozen bit $u_{a_\ell}$. Define
\begin{equation}
 \mu_\ell
 \triangleq \mathcal{G}_\ell(\mu_{\mathrm{ch}}).
\label{eq:ovd_unperturbed_ga_mean}
\end{equation}
The corresponding genie-aided decision LLR is modeled as
\begin{equation}
 L^{\mathrm{G}}(u_{a_\ell})
 \mathrel{\overset{\mathrm{GA}}{\sim}}
 \mathcal{N}(\mu_\ell,2\mu_\ell).
\label{eq:ovd_unperturbed_ga_llr}
\end{equation}
Under the all-zero assumption, a negative decision LLR represents an
intrinsic error. Its GA probability is
\begin{equation}
 p_\ell
 \triangleq Q\!\left(\sqrt{\frac{\mu_\ell}{2}}\right),
\label{eq:ovd_unperturbed_ga_error}
\end{equation}
where $Q(\cdot)$ is the standard Gaussian tail function. 

For a perturbation branch with variance $v\geq 0$, adding independent
zero-mean Gaussian noise gives
\begin{equation}
 L_{\mathrm{ch}}^{(v)}
 \sim \mathcal{N}\!\left(
 \mu_{\mathrm{ch}},2\mu_{\mathrm{ch}}+v
 \right).
\label{eq:ovd_perturbed_channel}
\end{equation}
For $v>0$, this distribution is not consistent Gaussian because its
variance is not twice its mean. The following proposition gives a
decision-equivalent consistent Gaussian representation.

\begin{proposition}[Decision-equivalent Gaussian channel]
\label{prop:ovd_equivalent_channel}
Under  min-sum SC decoding,
the perturbed channel in \eqref{eq:ovd_perturbed_channel} is hard-decision
equivalent to a consistent Gaussian channel with mean
\begin{equation}
 \mu_{\mathrm{ch}}^{\mathrm{eq}}(v)
 \triangleq
 \frac{2\mu_{\mathrm{ch}}^2}{2\mu_{\mathrm{ch}}+v}.
\label{eq:ovd_equivalent_mean}
\end{equation}
This representation is obtained by scaling all perturbed channel LLRs
by a common positive factor, without changing the min-sum SC hard decisions.
\end{proposition}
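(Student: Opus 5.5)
The plan has two parts. First, show that min-sum SC hard decisions are invariant under a common positive scaling of all channel LLRs. Then choose the scaling factor so that the scaled perturbed LLR is consistent Gaussian, i.e., its variance equals twice its mean.

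\emph{Scale invariance.} Fix $\alpha>0$ and replace every perturbed channel LLR $L_i^{(v)}$ by $\alpha L_i^{(v)}$. The min-sum $f$-operation is positively homogeneous:
\[
f(\alpha a,\alpha b)=\mathrm{sgn}(a)\,\mathrm{sgn}(b)\,\alpha\min(|a|,|b|)=\alpha f(a,b),
\]
since $\mathrm{sgn}(\alpha a)=\mathrm{sgn}(a)$. The $g$-operation is linear in its LLR arguments for a fixed partial-sum bit:
\[
g(\alpha a,\alpha b,\hat u)=\alpha\, g(a,b,\hat u).
\]
By induction over the decoding tree, processed in SC schedule order, every intermediate and decision LLR is multiplied by exactly $\alpha$. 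The induction hypothesis is that all previously decided bits $\hat u_j$, and hence all partial sums fed into the $g$-operations, coincide between the scaled and unscaled decoders.

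The hard-decision rule depends only on whether $L(u_i)<0$, and this is unchanged when $L(u_i)$ is multiplied by $\alpha>0$. Frozen bits are set to $0$ in both decoders. Hence each $\hat u_i$ agrees, which closes the induction. The main technical care lies in this step: one must track that the partial sums are identical at every stage so that the $g$-operations receive the same $\hat u$. Beyond that, the step is routine.

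\emph{Choice of $\alpha$.} By \eqref{eq:ovd_perturbed_channel},
\[
\alpha L_{\mathrm{ch}}^{(v)}\sim\mathcal N\!\bigl(\alpha\mu_{\mathrm{ch}},\,\alpha^2(2\mu_{\mathrm{ch}}+v)\bigr).
\]
Consistency requires $\alpha^2(2\mu_{\mathrm{ch}}+v)=2\alpha\mu_{\mathrm{ch}}$, which gives the unique positive solution
\[
\alpha=\frac{2\mu_{\mathrm{ch}}}{2\mu_{\mathrm{ch}}+v}.
\]
This factor is positive because $\mu_{\mathrm{ch}}>0$ and $v\ge0$. The resulting mean is $\alpha\mu_{\mathrm{ch}}=2\mu_{\mathrm{ch}}^2/(2\mu_{\mathrm{ch}}+v)$, which is \eqref{eq:ovd_equivalent_mean}.

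The scaled LLRs remain i.i.d. across positions, because the channel noise and the perturbation are independent and i.i.d. Therefore the scaled vector is exactly a consistent Gaussian channel with mean $\mu_{\mathrm{ch}}^{\mathrm{eq}}(v)$. By the scale-invariance step, it yields the same min-sum SC hard decisions as the perturbed channel, which establishes hard-decision equivalence. For $v=0$ we have $\alpha=1$, so the construction reduces consistently to the unperturbed channel.
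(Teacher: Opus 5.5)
Your proposal is correct and follows essentially the same route as the paper. Positive homogeneity of the min-sum $f$ and $g$ operations, with induction along the SC schedule, shows that hard decisions are unchanged under a common positive scaling, and the choice $\alpha(v)=2\mu_{\mathrm{ch}}/(2\mu_{\mathrm{ch}}+v)$ makes the scaled LLR consistent Gaussian with mean $\mu_{\mathrm{ch}}^{\mathrm{eq}}(v)$. The differences are cosmetic: you derive $\alpha$ as the unique positive root of the consistency condition where the paper only verifies it, and the paper also notes that the invariance holds for genie-aided SC decoding, which is the form used in the later GA analysis.
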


\begin{IEEEproof}
For any $a>0$, the min-sum operations satisfy
\begin{equation*}
 f(ax,ay)=a f(x,y),
 \qquad
 g(ax,ay,u)=a g(x,y,u).
\label{eq:ovd_positive_homogeneity}
\end{equation*}
Suppose that all previous hard decisions are identical before and after scaling the channel LLRs by $a$. Then the same decision values
are used in the $g$-operations, and the LLR for the next decision is also scaled by $a$. Since $a>0$, its sign is unchanged, and hence the
corresponding hard decision remains unchanged. Applying this argument successively along the SC decoding, all hard decisions are preserved under a
common positive scaling of the channel LLRs. The same scaling invariance also holds for genie-aided SC decoding.

Choose
\begin{equation}
 \alpha(v)
 \triangleq
 \frac{2\mu_{\mathrm{ch}}}{2\mu_{\mathrm{ch}}+v}>0.
\label{eq:ovd_scaling_factor}
\end{equation}
The scaled channel LLR satisfies
\begin{equation}
 \alpha(v)L_{\mathrm{ch}}^{(v)}
 \sim \mathcal{N}\!\left(
 \alpha(v)\mu_{\mathrm{ch}},
 \alpha^2(v)(2\mu_{\mathrm{ch}}+v)
 \right).
\label{eq:ovd_scaled_distribution}
\end{equation}
By the definitions of $\alpha(v)$ and
$\mu_{\mathrm{ch}}^{\mathrm{eq}}(v)$,
\begin{equation}
 \begin{aligned}
 \alpha(v)\mu_{\mathrm{ch}}
 &=\mu_{\mathrm{ch}}^{\mathrm{eq}}(v),\\
 \alpha^2(v)(2\mu_{\mathrm{ch}}+v)
 &=2\mu_{\mathrm{ch}}^{\mathrm{eq}}(v).
 \end{aligned}
\label{eq:ovd_scaled_consistency}
\end{equation}
Hence,
\begin{equation}
 \alpha(v)L_{\mathrm{ch}}^{(v)}
 \sim \mathcal{N}\!\left(
 \mu_{\mathrm{ch}}^{\mathrm{eq}}(v),
 2\mu_{\mathrm{ch}}^{\mathrm{eq}}(v)
 \right).
\end{equation}
Thus, the scaled channel follows a consistent Gaussian distribution
and is hard-decision equivalent to the original perturbed channel
under min-sum SC decoding.
\end{IEEEproof}

Proposition~\ref{prop:ovd_equivalent_channel} establishes decision
equivalence between the perturbed channel and its scaled representation.
We therefore apply GA to the scaled channel and define
\begin{align}
 \mu_\ell(v)
 &\triangleq
 \mathcal{G}_\ell\!\left(
 \mu_{\mathrm{ch}}^{\mathrm{eq}}(v)
 \right),
\label{eq:ovd_perturbed_ga_mean}\\
 p_\ell(v)
 &\triangleq
 Q\!\left(\sqrt{\frac{\mu_\ell(v)}{2}}\right).
\label{eq:ovd_perturbed_ga_error}
\end{align}
Here, $\mu_\ell(v)$ is the GA mean of the decision LLR after perturbation
under the scaled representation, and $p_\ell(v)$ approximates the
corresponding error probability.

\subsubsection{Local Repair and Damage Probabilities}

For the $\ell$-th non-frozen position $a_\ell$,  $p_\ell$ and $p_\ell(v)$ describe the error probabilities
before and after perturbation, but do not specify how the decision
changes. We therefore introduce the repair probability $C_\ell(v)$ and
the damage probability $D_\ell(v)$.

Following the local perturbation approximation in \cite{b8}, let
$k_\ell$ denote the number of $g$-operations on the decoding path of
$a_\ell$. The effective perturbation variance at this position is
approximated by
\begin{equation}
 \nu_\ell(v)
 \triangleq
 2^{k_\ell}v.
\label{eq:ovd_local_variance}
\end{equation}
Under this approximation, the perturbation variance is unchanged through
an $f$-operation and doubled through a $g$-operation.

Let
\begin{equation}
 \Lambda_\ell
 \sim
 \mathcal{N}(\mu_\ell,2\mu_\ell)
\end{equation}
denote the unperturbed decision LLR under GA. To estimate the repair
probability, we introduce the local additive-Gaussian model
\begin{equation}
 \widetilde{\Lambda}_\ell(v)
 =
 \Lambda_\ell+Z_\ell(v),
 \qquad
 Z_\ell(v)
 \sim
 \mathcal{N}\!\left(0,\nu_\ell(v)\right),
\label{eq:ovd_local_transition_model}
\end{equation}
where $Z_\ell(v)$ is assumed independent of $\Lambda_\ell$.

Let $\varphi_\ell(x)$ denote the density of
$\mathcal{N}(\mu_\ell,2\mu_\ell)$, and let $\Phi(\cdot)$ denote the
standard Gaussian cumulative distribution function. For $v>0$, define
the repair probability as
\begin{align}
 C_\ell(v)
 &\triangleq
 \Pr\!\left(
 \widetilde{\Lambda}_\ell(v)>0
 \mid
 \Lambda_\ell<0
 \right)
\label{eq:ovd_repair_probability_definition}\\
 &=
 \frac{1}{p_\ell}
 \int_{-\infty}^{0}
 \Phi\!\left(
 \frac{x}{\sqrt{\nu_\ell(v)}}
 \right)
 \varphi_\ell(x)\,\mathrm{d}x.
\label{eq:ovd_repair_probability}
\end{align}
Obviously, $C_\ell(0)=0$.
Thus, $C_\ell(v)$ is the probability that an erroneous decision at position $a_\ell$ becomes correct after perturbation. 
Next, let $D_\ell(v)$ denote the probability that a correct decision at
position $a_\ell$ becomes erroneous after perturbation. To match the
perturbed marginal error probability $p_\ell(v)$, we require
\begin{equation}
 p_\ell(v)
 =
 p_\ell[1-C_\ell(v)]
 +(1-p_\ell)D_\ell(v).
\label{eq:ovd_probability_consistency}
\end{equation}
Hence,
\begin{equation}
 D_\ell(v)
 \triangleq
 \frac{
 p_\ell(v)-p_\ell[1-C_\ell(v)]
 }{
 1-p_\ell
 }.
\label{eq:ovd_damage_probability}
\end{equation}
Similarly, $D_\ell(0)=0$.

\begin{lemma}[Validity of repair and damage probabilities]
\label{lem:ovd_transition_validity}
For $v\geq0$,
\begin{equation}
 0\leq C_\ell(v)\leq\frac{1}{2},
 \qquad
 0\leq D_\ell(v)\leq\frac{1}{2}.
\label{eq:ovd_transition_bounds}
\end{equation}
Hence, $C_\ell(v)$ and $D_\ell(v)$ are valid transition probabilities.
\end{lemma}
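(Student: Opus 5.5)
The plan is to bound $C_\ell(v)$ directly from its integral representation \eqref{eq:ovd_repair_probability}. Then I obtain both bounds on $D_\ell(v)$ from \eqref{eq:ovd_damage_probability}, using the bound on $C_\ell(v)$ together with a monotonicity property of the GA mapping $\mathcal{G}_\ell$. The case $v=0$ is immediate, since $C_\ell(0)=D_\ell(0)=0$, so I assume $v>0$. I also note that $p_\ell=Q(\sqrt{\mu_\ell/2})\in(0,\tfrac12]$ whenever $\mu_\ell\ge 0$. Hence the divisions by $p_\ell$ and by $1-p_\ell$ are well defined.

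\emph{Bounds on $C_\ell(v)$.} On the integration range $x<0$ we have $0\le \Phi(x/\sqrt{\nu_\ell(v)})\le \Phi(0)=\tfrac12$, because $\nu_\ell(v)=2^{k_\ell}v>0$. Multiplying by $\varphi_\ell(x)\ge 0$ and integrating over $(-\infty,0)$ gives
\begin{equation*}
0\le \int_{-\infty}^{0}\Phi\!\left(\frac{x}{\sqrt{\nu_\ell(v)}}\right)\varphi_\ell(x)\,\mathrm{d}x\le \frac12\int_{-\infty}^{0}\varphi_\ell(x)\,\mathrm{d}x=\frac{p_\ell}{2}.
\end{equation*}
The last equality holds because $\Pr(\Lambda_\ell<0)=Q(\mu_\ell/\sqrt{2\mu_\ell})=p_\ell$. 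Dividing by $p_\ell$ yields $0\le C_\ell(v)\le\tfrac12$.

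\emph{Bounds on $D_\ell(v)$.} The key auxiliary fact is $p_\ell\le p_\ell(v)\le\tfrac12$.
\begin{itemize}
\item The upper bound $p_\ell(v)\le\tfrac12$ follows from $\mu_\ell(v)\ge0$, since GA means are nonnegative.
\item For the lower bound, note that \eqref{eq:ovd_equivalent_mean} gives $\mu_{\mathrm{ch}}^{\mathrm{eq}}(v)\le\mu_{\mathrm{ch}}$ for $v\ge0$. Hence it suffices that $\mathcal{G}_\ell$ is nondecreasing, because then $\mu_\ell(v)\le\mu_\ell$, and $Q$ is decreasing.
\end{itemize}
With this fact in hand, nonnegativity follows because $p_\ell(v)-p_\ell[1-C_\ell(v)]\ge p_\ell-p_\ell[1-C_\ell(v)]=p_\ell C_\ell(v)\ge0$. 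For the upper bound, I use $C_\ell(v)\le\tfrac12$, so that $p_\ell[1-C_\ell(v)]\ge p_\ell/2$. Then
\begin{equation*}
p_\ell(v)-p_\ell[1-C_\ell(v)]\le \frac12-\frac{p_\ell}{2}=\frac{1-p_\ell}{2},
\end{equation*}
and dividing by $1-p_\ell>0$ gives $D_\ell(v)\le\tfrac12$.

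\emph{Main obstacle.} The step I expect to need the most care is justifying that $\mathcal{G}_\ell$ is nondecreasing and maps nonnegative means to nonnegative means. My approach is to write $\mathcal{G}_\ell$ as the composition, along the decoding path of $a_\ell$, of the two elementary GA updates. The $g$-update is $\mu\mapsto2\mu$. The $f$-update is $\mu\mapsto\phi^{-1}\!\big(1-(1-\phi(\mu))^2\big)$, where $\phi$ is the standard GA function of \cite{b12}. Each update is nondecreasing and maps $[0,\infty)$ into itself. For the $f$-update this holds because $\phi$ is continuous, strictly decreasing, and takes values in $(0,1]$, with $\phi(0)=1$. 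A composition of nondecreasing self-maps of $[0,\infty)$ is again such a map, which gives the required monotonicity and nonnegativity.
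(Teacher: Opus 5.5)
Your proposal is correct and follows the same argument as the paper. You bound $\Phi(x/\sqrt{\nu_\ell(v)})\in[0,\tfrac12]$ on $x<0$ to get $0\le C_\ell(v)\le\tfrac12$. You then combine $p_\ell\le p_\ell(v)\le\tfrac12$, which comes from $\mu_{\mathrm{ch}}^{\mathrm{eq}}(v)\le\mu_{\mathrm{ch}}$, with $C_\ell(v)\in[0,\tfrac12]$ in the expression for $D_\ell(v)$. The one difference is that you explicitly justify why $\mathcal{G}_\ell$ is nondecreasing and preserves nonnegativity, via the elementary $f$- and $g$-updates of GA; the paper simply asserts this as following from ``the GA recursion.''
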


\begin{IEEEproof}
For $v=0$, the result follows directly from
$C_\ell(0)=D_\ell(0)=0$. We therefore consider $v>0$.
Since
$\mu_{\mathrm{ch}}^{\mathrm{eq}}(v)
=2\mu_{\mathrm{ch}}^2/(2\mu_{\mathrm{ch}}+v)
\leq\mu_{\mathrm{ch}}$,
the GA recursion gives
$0\leq\mu_\ell(v)\leq\mu_\ell$.
Because the $Q$-function is decreasing on the nonnegative real line and
$Q(0)=1/2$, we have
\begin{equation}
 p_\ell
 \leq
 p_\ell(v)
 \leq
 \frac{1}{2}.
\label{eq:ovd_error_probability_bounds}
\end{equation}

Next, consider $C_\ell(v)$. For $x<0$ and $\nu_\ell(v)>0$,
$0\leq\Phi(x/\sqrt{\nu_\ell(v)})\leq1/2$.
Moreover,
$p_\ell=\int_{-\infty}^{0}\phi_\ell(x)\,dx$.
Therefore, from the definition of $C_\ell(v)$,
\[
 0\leq C_\ell(v)\leq\frac{1}{2}.
\]

Finally, from the definition of $D_\ell(v)$,
\[
 D_\ell(v)
 =
 \frac{p_\ell(v)-p_\ell+p_\ell C_\ell(v)}
 {1-p_\ell}.
\]
Since $p_\ell(v)\geq p_\ell$ and $C_\ell(v)\geq0$, we have
$D_\ell(v)\geq0$. Moreover, using
$p_\ell(v)\leq1/2$ and $C_\ell(v)\leq1/2$,
\begin{equation}
 D_\ell(v)
 \leq
 \frac{\frac{1}{2}-p_\ell+\frac{p_\ell}{2}}
 {1-p_\ell}
 =
 \frac{1}{2}.
\end{equation}
Hence, $0\leq D_\ell(v)\leq1/2$.
\end{IEEEproof}

If both $C_\ell(v)$ and $D_\ell(v)$ are evaluated directly from the local additive-Gaussian model, the resulting transition probabilities
do not in general satisfy the marginal consistency relation in \eqref{eq:ovd_probability_consistency}. Therefore, we use the local
Gaussian model only to estimate $C_\ell(v)$, while $D_\ell(v)$ is determined from \eqref{eq:ovd_probability_consistency} to preserve the
perturbed marginal error probability $p_\ell(v)$.

\subsubsection{Analytical Error Model and Recovery Recursions}

The marginal probabilities $p_\ell$ do not determine the joint
intrinsic-error pattern. For a  offline model, we approximate
the intrinsic-error indicators by independent Bernoulli variables:
\begin{equation}
 \widetilde{X}_\ell
 \mathrel{\overset{\mathrm{ind}}{\sim}}
 \operatorname{Bernoulli}(p_\ell),
 \qquad 0\leq\ell<M,
\label{eq:ovd_independent_states}
\end{equation}
where $\widetilde{X}_\ell=1$ indicates an analytical intrinsic error at
position $a_\ell$. These variables approximate the true error indicators
$X_\ell$ defined in \eqref{eq:ovd_genie_indicator}.

To simplify the analysis, we further assume that the repair and damage
transitions are independent across non-frozen positions. At position
$a_\ell$, an erroneous decision is repaired with probability
$C_\ell(v)$, while a correct decision remains correct with probability
$1-D_\ell(v)$. Note that these independence assumptions are used only in the
analytical model and approximate the actual SC recovery process.

We first characterize the number of errors after a given position.
Let $B_{\ell,j}$ denote the probability that exactly $j$ analytical
errors occur among the suffix positions
$a_{\ell+1},\ldots,a_{M-1}$. For the empty suffix,
\begin{equation}
 B_{M-1,0}=1.
\label{eq:ovd_B_boundary}
\end{equation}
For $\ell=M-2,\ldots,0$, the backward recursion is
\begin{equation}
 B_{\ell,j}
 =
 (1-p_{\ell+1})B_{\ell+1,j}
 +
 p_{\ell+1}B_{\ell+1,j-1}.
\label{eq:ovd_B_recursion}
\end{equation}
The first term corresponds to a correct decision at $a_{\ell+1}$,
whereas the second corresponds to an intrinsic error.

Next, for a given perturbation variance $v$, let $A_{\ell,j}(v)$ denote
the joint probability that the same suffix contains exactly $j$
analytical errors before perturbation and that all suffix decisions are
correct after perturbation. Its boundary condition is
\begin{equation}
 A_{M-1,0}(v)=1.
\label{eq:ovd_A_boundary}
\end{equation}
For $\ell=M-2,\ldots,0$,
\begin{equation}
 \begin{aligned}
 A_{\ell,j}(v)
 &=
 (1-p_{\ell+1})[1-D_{\ell+1}(v)]
 A_{\ell+1,j}(v)\\
 &\quad+
 p_{\ell+1}C_{\ell+1}(v)
 A_{\ell+1,j-1}(v).
 \end{aligned}
\label{eq:ovd_A_recursion}
\end{equation}
The first term represents an originally correct decision that remains
correct after perturbation. The second term represents an originally
erroneous decision that is repaired. 

Thus, $B_{\ell,j}$ gives the probability of $j$ original errors in the
suffix, whereas $A_{\ell,j}(v)$ additionally requires all suffix
decisions to be correct after perturbation.

Under the analytical model, the weight of class $(\ell,m)$ is
\begin{equation}
 w_{\ell,m}
 =
 \left[
 \prod_{s=0}^{\ell-1}(1-p_s)
 \right]
 p_\ell B_{\ell,m},
 \qquad
 (\ell,m)\in\mathcal{S}.
\label{eq:ovd_class_weight}
\end{equation}
The three factors correspond to no error before $a_\ell$, an intrinsic error at $a_\ell$, and exactly $m$ additional errors after $a_\ell$, respectively.
The analytical failure probability is
\begin{equation}
 \widetilde{P}_{\mathrm{fail}}
 \triangleq
 1-\prod_{s=0}^{M-1}(1-p_s).
\label{eq:ovd_analytical_failure_probability}
\end{equation}
For $\widetilde{P}_{\mathrm{fail}}>0$, the analytical class prior is defined as
\begin{equation}
 \widetilde{\pi}_{\ell,m}
 \triangleq
 \frac{w_{\ell,m}}
 {\widetilde{P}_{\mathrm{fail}}}.
\label{eq:ovd_analytical_prior}
\end{equation}

For a class with $B_{\ell,m}>0$, define the analytical single-branch
recovery probability as
\begin{equation}
 q_{\ell,m}(v)
 =
 C_\ell(v)
 \left[
 \prod_{s=0}^{\ell-1}[1-D_s(v)]
 \right]
 \frac{A_{\ell,m}(v)}
 {B_{\ell,m}}.
\label{eq:ovd_class_recovery}
\end{equation}
Here, $C_\ell(v)$ repairs the first intrinsic error,
$\prod_{s=0}^{\ell-1}[1-D_s(v)]$ keeps all preceding decisions correct,
and $A_{\ell,m}(v)/B_{\ell,m}$ is the conditional probability that the
entire suffix is correct after perturbation given exactly $m$ original
suffix errors. Empty products are defined as one.

Since $0\leq A_{\ell,j}(v)\leq B_{\ell,j}$, we have
\begin{equation*}
 0\leq q_{\ell,m}(v)\leq 1.
\end{equation*}
Moreover, $q_{\ell,m}(0)=0$. The probabilities
$\widetilde{\pi}_{\ell,m}$ and $q_{\ell,m}(v)$ are analytical
approximations to the true class priors and single-branch recovery
probabilities, respectively.

\subsection{Analytical Objective and Structural Property}
\label{subsec:analytical_objective_structure}

Since the exact frame-level objective is generally difficult to characterize within a fully analytical framework, we construct a tractable class-level
approximation. That is
\begin{equation}
 \widetilde{J}(V)
 =
 \sum_{(\ell,m)\in\mathcal{S}}
 \widetilde{\pi}_{\ell,m}
 \left\{
 1-
 \prod_{v_t\in V}
 \left[1-q_{\ell,m}(v_t)\right]
 \right\}.
\label{eq:ovd_analytical_objective}
\end{equation}
Equation~\eqref{eq:ovd_analytical_objective} has the same coverage structure as the exact objective in \eqref{eq:ovd_true_objective}, but it is not an exact
representation of $J(V)$. First, the frame-level conditional expectation is replaced by a product of class-level recovery probabilities. Second, the class
probabilities and recovery curves are obtained under Gaussian approximation, effective-perturbation, and independence approximations. Hence,
$\widetilde{J}(V)$ is used as a tractable analytical approximation to $J(V)$.

Let $V\uplus\{v\}$ denote the perturbation variance multiset obtained by adding one independent perturbation branch with variance $v$ to the current
multiset $V$.  Define the exact and analytical marginal gains as
\begin{align}
 \Delta(v\mid V)
 &\triangleq
 J\!\left(V\uplus\{v\}\right)-J(V),
\label{eq:ovd_true_set_marginal_definition}\\
 \widetilde{\Delta}(v\mid V)
 &\triangleq
 \widetilde{J}\!\left(V\uplus\{v\}\right)-\widetilde{J}(V).
\label{eq:ovd_analytical_set_marginal_definition}
\end{align}

\begin{theorem}[Monotonicity and diminishing marginal recovery gains]
\label{thm:ovd_objective_structure}
The exact objective $J$ and the analytical objective $\widetilde{J}$ satisfy
\begin{equation}
 J(\varnothing)=0,
 \qquad
 \widetilde{J}(\varnothing)=0.
\label{eq:ovd_set_normalization}
\end{equation}

For any variance multiset $V$ and any additional variance $v$,
\begin{equation}
 J\!\left(V\uplus\{v\}\right)\geq J(V),
 \qquad
 \widetilde{J}\!\left(V\uplus\{v\}\right)\geq\widetilde{J}(V).
\label{eq:ovd_set_monotonicity}
\end{equation}

Moreover, for any two variances $v$ and $v'$,
\begin{equation}
 0\leq
 \Delta\!\left(v\mid V\uplus\{v'\}\right)
 \leq
 \Delta(v\mid V),
\label{eq:ovd_true_set_diminishing}
\end{equation}
and
\begin{equation}
 0\leq
 \widetilde{\Delta}\!\left(v\mid V\uplus\{v'\}\right)
 \leq
 \widetilde{\Delta}(v\mid V).
\label{eq:ovd_analytical_set_diminishing}
\end{equation}

Thus, adding one independent perturbation branch cannot reduce either recovery
objective, while the marginal recovery gain of another branch cannot increase
after one more branch has already been added.
\end{theorem}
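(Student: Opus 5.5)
Both objectives share one structure: a nonnegative weighting (the conditional law of $\omega$ given $\mathrm{Fail}_{\mathrm{SC}}$ for $J$, the analytical priors $\widetilde{\pi}_{\ell,m}\geq0$ for $\widetilde{J}$) of a coverage function
\begin{equation*}
 F_x(V)=1-\prod_{v_t\in V}\left[1-r_x(v_t)\right],
\end{equation*}
with $r_x(v)\in[0,1]$. For $J$ we take $x=\omega$ and $r_x=s_\omega$; for $\widetilde{J}$ we take $x=(\ell,m)$ and $r_x=q_{\ell,m}$. The earlier bound $0\leq q_{\ell,m}(v)\leq1$ supplies the range condition in the analytical case. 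Classes with $B_{\ell,m}=0$ have $w_{\ell,m}=0$, so they carry zero weight and can be dropped. The plan is to prove every claim pointwise for a single $F_x$ and then pass it through the expectation or sum. This is legitimate because expectation and nonnegative sums are linear and preserve inequalities, and the marginal gains of $J$ and $\widetilde{J}$ are exactly the weighted averages of the pointwise marginal gains.

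\emph{Step 1 (normalization).} The empty product equals one, so $F_x(\varnothing)=0$, and hence $J(\varnothing)=\widetilde{J}(\varnothing)=0$.

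\emph{Step 2 (pointwise marginal gain).} Write $\Pi_x(V)=\prod_{v_t\in V}[1-r_x(v_t)]\in[0,1]$. Adding one branch multiplies this product by a single factor, which gives
\begin{equation*}
 F_x(V\uplus\{v\})-F_x(V)=r_x(v)\,\Pi_x(V)\geq0 .
\end{equation*}
Averaging over $x$ yields monotonicity \eqref{eq:ovd_set_monotonicity} and the left inequalities (nonnegativity) in \eqref{eq:ovd_true_set_diminishing} and \eqref{eq:ovd_analytical_set_diminishing}.

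\emph{Step 3 (diminishing gains).} Since $\Pi_x(V\uplus\{v'\})=[1-r_x(v')]\,\Pi_x(V)$, we get
\begin{equation*}
 r_x(v)\,\Pi_x(V\uplus\{v'\})=[1-r_x(v')]\,r_x(v)\,\Pi_x(V)\leq r_x(v)\,\Pi_x(V),
\end{equation*}
because $0\leq1-r_x(v')\leq1$. Taking the weighted average gives the right inequalities in \eqref{eq:ovd_true_set_diminishing} and \eqref{eq:ovd_analytical_set_diminishing}. The argument holds for multisets and allows $v'=v$, since nothing in it requires the variances to be distinct.

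The main point needing care is the exact objective. Here one must justify that, for a fixed frame $\omega$, the new branch is independent of the existing branches, so that $P_\omega(V)$ truly has the product form \eqref{eq:ovd_fixed_frame_recovery}. This follows from the stated independence of the perturbation vectors across branches given the received vector. One must also justify that $s_\omega$ is measurable and bounded, so that the conditional expectation in \eqref{eq:ovd_true_objective} is well defined and linear. Once these are in place, the remainder is the elementary algebra above.
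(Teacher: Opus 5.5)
Your proposal is correct and is essentially the paper's proof: normalization from the empty product, the marginal gain written as the new branch's recovery probability times the probability that all existing branches fail, nonnegativity because every factor lies in $[0,1]$, and diminishing gains from the extra factor $1-r_x(v')\in[0,1]$. Your unified coverage-function packaging of $J$ and $\widetilde{J}$, and the remark that classes with $B_{\ell,m}=0$ have $w_{\ell,m}=0$ (so the undefined $q_{\ell,m}$ there is harmless), are small tidy-ups that the paper leaves implicit.
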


\begin{IEEEproof}
For $V=\varnothing$, the failure-probability product is an empty product and
therefore equals one. Hence, no perturbation branch is available to recover
the frame, which gives \eqref{eq:ovd_set_normalization}.

From the definition of the exact objective, the marginal gain of adding one
branch with variance $v$ is
\begin{equation}
 \Delta(v\mid V)
 =
 \mathbb{E}_{\omega}\!\left[
 s_{\omega}(v)
 \prod_{v_t\in V}
 \left[1-s_{\omega}(v_t)\right]
 \,\middle|\,
 \mathrm{Fail}_{\mathrm{SC}}
 \right].
\label{eq:ovd_true_set_marginal}
\end{equation}
For a given failed frame, the product in \eqref{eq:ovd_true_set_marginal} gives the probability that all branches in
$V$ fail, while $s_{\omega}(v)$ gives the recovery probability of the newly added branch.

Similarly, the analytical marginal gain is
\begin{equation}
 \begin{aligned}
 \widetilde{\Delta}(v\mid V)
 &=
 \sum_{(\ell,m)\in\mathcal{S}}
 \widetilde{\pi}_{\ell,m}q_{\ell,m}(v)\\
 &\quad\times
 \prod_{v_t\in V}
 \left[1-q_{\ell,m}(v_t)\right].
 \end{aligned}
\label{eq:ovd_analytical_set_marginal}
\end{equation}
All factors in these expressions lie in $[0,1]$. Therefore,
\[
 \Delta(v\mid V)\geq 0,
 \qquad
 \widetilde{\Delta}(v\mid V)\geq 0,
\]
which proves \eqref{eq:ovd_set_monotonicity}.

Now suppose that one branch with variance $v'$ has already been added. Then
\begin{equation}
 \begin{aligned}
 \Delta\!\left(v\mid V\uplus\{v'\}\right)
 &=
 \mathbb{E}_{\omega}\!\left[
 s_{\omega}(v)
 \left[1-s_{\omega}(v')\right]\right.\\
 &\qquad\left.
 \times
 \prod_{v_t\in V}
 \left[1-s_{\omega}(v_t)\right]
 \,\middle|\,
 \mathrm{Fail}_{\mathrm{SC}}
 \right].
 \end{aligned}
\label{eq:ovd_true_set_diminishing_proof}
\end{equation}
Since
\[
 0\leq 1-s_{\omega}(v')\leq 1,
\]
we have
\[
 \Delta\!\left(v\mid V\uplus\{v'\}\right)
 \leq
 \Delta(v\mid V).
\]

Similarly,
\begin{equation}
 \begin{aligned}
 \widetilde{\Delta}\!\left(v\mid V\uplus\{v'\}\right)
 &=
 \sum_{(\ell,m)\in\mathcal{S}}
 \widetilde{\pi}_{\ell,m}q_{\ell,m}(v)
 \left[1-q_{\ell,m}(v')\right]\\
 &\quad\times
 \prod_{v_t\in V}
 \left[1-q_{\ell,m}(v_t)\right].
 \end{aligned}
\label{eq:ovd_analytical_set_diminishing_proof}
\end{equation}
Since
\[
 0\leq 1-q_{\ell,m}(v')\leq 1,
\]
we obtain
\[
 \widetilde{\Delta}\!\left(v\mid V\uplus\{v'\}\right)
 \leq
 \widetilde{\Delta}(v\mid V).
\]

Together with the nonnegativity of both marginal gains, this proves
\eqref{eq:ovd_true_set_diminishing} and
\eqref{eq:ovd_analytical_set_diminishing}.
\end{IEEEproof}

\section{Offline Variance Design for PSCP Decoding}
\label{sec:ovd_design}

Based on the analytical objective in Section~III, this section constructs a finite candidate variance set and selects $T$ perturbation variances using a greedy algorithm.
The variance design is performed offline for a fixed code, operating SNR, and number of perturbation branches.
We also present the complete OVD-PSCP procedure, which combines offline variance design with online parallel decoding.

\subsection{Candidate Variance Set}
\label{subsec:ovd_candidate_set}

To construct a finite set of candidate perturbation variances, we first
characterize the perturbation strength at each non-frozen position. Under
the local model in \eqref{eq:ovd_local_variance}, define
\begin{equation}
 \eta_\ell(v)
 \triangleq
 \frac{\nu_\ell(v)}{2\mu_\ell+\nu_\ell(v)}
 =
 \frac{v}{\rho_\ell+v},
 \qquad
 \rho_\ell
 \triangleq
 \frac{2\mu_\ell}{2^{k_\ell}}.
\label{eq:ovd_sensitivity_ratio}
\end{equation}
Here, $\eta_\ell(v)$ represents the fraction of the local Gaussian variance
contributed by the perturbation. It increases monotonically from $0$ to $1$
as $v$ increases, while $\rho_\ell$ determines the variance scale of this
transition.

For a given $0<\epsilon<1/2$, the values
\[
 \rho_\ell\frac{\epsilon}{1-\epsilon}
 \quad\text{and}\quad
 \rho_\ell\frac{1-\epsilon}{\epsilon}
\]
correspond to $\eta_\ell(v)=\epsilon$ and
$\eta_\ell(v)=1-\epsilon$, respectively. We therefore use this interval to
represent the main transition region for position $a_\ell$.

Since different non-frozen positions have different $\rho_\ell$, define
\begin{equation}
 \rho_{\min}
 \triangleq
 \min_{0\leq\ell<M}\rho_\ell,
 \qquad
 \rho_{\max}
 \triangleq
 \max_{0\leq\ell<M}\rho_\ell,
\label{eq:ovd_rho_range}
\end{equation}
and choose the common design range
\begin{equation}
 v_{\min}
 \triangleq
 \rho_{\min}\frac{\epsilon}{1-\epsilon},
 \qquad
 v_{\max}
 \triangleq
 \rho_{\max}\frac{1-\epsilon}{\epsilon}.
\label{eq:ovd_variance_range}
\end{equation}
This range covers the main transition regions of all non-frozen positions under the local model.

Let $G$ denote the number of candidate variances. For $G\geq2$, define
\begin{equation}
  V_{\mathrm c}
 \triangleq
 \{v_0,v_1,\ldots,v_{G-1}\},
\label{eq:ovd_candidate_set}
\end{equation}
where
\begin{equation}
 v_g
 \triangleq
 v_{\min}
 \left(
 \frac{v_{\max}}{v_{\min}}
 \right)^{g/(G-1)},
 \qquad
 g=0,1,\ldots,G-1.
\label{eq:ovd_log_grid}
\end{equation}
Thus, the candidate variances are logarithmically spaced over $[v_{\min},v_{\max}]$.
The interval in \eqref{eq:ovd_variance_range} is only a model-based design range and need not contain the continuous-domain optimum.

\begin{algorithm}[t]
\caption{OVD-PSCP Decoding}
\label{alg:ovd_greedy}
\begin{algorithmic}[1]

\STATE \textit{// Offline variance design}
\STATE \textbf{Input:} Candidate set $\mathcal V_{\mathrm c}$,
$\{q_{\ell,m,g}\}$, $\{\widetilde{\pi}_{\ell,m}\}$, and $T$;
\STATE \textbf{Output:} Variance multiset $V$;
\STATE $V\leftarrow\varnothing$;
\STATE $\xi_{\ell,m}\leftarrow1$ for all
$(\ell,m)\in\mathcal S$;

\FOR{$\tau=0$ \TO $T-1$}
    \FOR{$g=0$ \TO $G-1$}
        \STATE Compute $\widetilde{\Delta}(v_g\mid V)$ using
        \eqref{eq:ovd_candidate_marginal_gain};
    \ENDFOR
    \STATE $g^\star\leftarrow
    \mathop{\arg\max}_{0\leq g<G}
    \widetilde{\Delta}(v_g\mid V)$;
    \STATE $V\leftarrow V\uplus\{v_{g^\star}\}$;
    \FORALL{$(\ell,m)\in\mathcal S$}
        \STATE $\xi_{\ell,m}\leftarrow
        \xi_{\ell,m}(1-q_{\ell,m,g^\star})$;
    \ENDFOR
\ENDFOR

\STATE \textit{// Online parallel decoding}
\STATE \textbf{Input:} Channel LLR vector $\mathbf{L}(y)$,
stored variance multiset $V=\{v_0,\ldots,v_{T-1}\}$,
and non-frozen set $\mathcal I$;
\STATE \textbf{Output:} Decoded estimate $\hat{\mathbf{u}}$
or decoding failure;

\FOR{$b=0$ \TO $T$ \textbf{in parallel}}
    \IF{$b=0$}
        \STATE $\hat{\mathbf{u}}^{(b)}\leftarrow
        \mathrm{SCDecoder}(\mathbf{L}(y),\mathcal I)$;
    \ELSE
        \STATE Generate $\mathbf{L}^{(b-1)}(p)$ using
        \eqref{eq:parallel_perturbed_llr} with variance
        $v_{b-1}$;
        \STATE $\hat{\mathbf{u}}^{(b)}\leftarrow
        \mathrm{SCDecoder}(\mathbf{L}^{(b-1)}(p),\mathcal I)$;
    \ENDIF
\ENDFOR

\FOR{$b=0$ \TO $T$}
    \IF{$\mathrm{CRCCheck}(\hat{\mathbf{u}}^{(b)})
        =\mathrm{success}$}
        \RETURN $\hat{\mathbf{u}}^{(b)}$;
    \ENDIF
\ENDFOR
\RETURN failure;

\end{algorithmic}
\end{algorithm}

\subsection{Greedy Variance Allocation}
\label{subsec:ovd_greedy}

For each candidate variance $v_g\in V_{\mathrm c}$, precompute
\begin{equation}
 q_{\ell,m,g}
 \triangleq
 q_{\ell,m}(v_g),
 \qquad
 (\ell,m)\in\mathcal S,
 \quad
 0\leq g<G.
\label{eq:ovd_candidate_recovery_table}
\end{equation}
For a variance multiset $V$, define
\begin{equation}
 \xi_{\ell,m}(V)
 \triangleq
 \prod_{v_t\in V}
 \left[1-q_{\ell,m}(v_t)\right],
\label{eq:ovd_class_residual}
\end{equation}
which is the probability that class $(\ell,m)$ remains unrecovered by all branches in $V$.

From the analytical marginal gain defined in Section~III, adding one branch with candidate variance $v_g$ gives
\begin{align}
 \widetilde{\Delta}(v_g\mid V)
 &=
 \widetilde J(V\uplus\{v_g\})
 -\widetilde J(V)
\label{eq:ovd_candidate_marginal_gain_definition}\\
 &=
 \sum_{(\ell,m)\in\mathcal S}
 \widetilde{\pi}_{\ell,m}
 q_{\ell,m,g}
 \xi_{\ell,m}(V).
\label{eq:ovd_candidate_marginal_gain}
\end{align}

Based on Theorem~\ref{thm:ovd_objective_structure}, the variance multiset is constructed greedily.
Starting from $V=\varnothing$, the candidate variance with the largest current marginal gain in the analytical objective is added at each step.
The same candidate variance may be selected multiple times. After $T$ selections, the variances in $V$ are assigned to the $T$ perturbation branches in selection order.
The variance design is performed offline, and the selected variances are stored for reuse in online decoding.
The complete OVD-PSCP procedure, including offline variance design and online parallel decoding, is summarized in Algorithm~\ref{alg:ovd_greedy}.

The offline complexity consists of the analytical precomputation and the greedy variance allocation. The $B_{\ell,j}$ recursion is evaluated once
and requires $O(M^2)$ operations. For each of the $G$ candidate variances, one GA evaluation and one $A_{\ell,j}(v_g)$ recursion are required. With an
$O(N)$ GA implementation, these computations require $O(GN+GM^2)$ operations in total. During greedy allocation, the marginal
gain of each candidate is evaluated over $|\mathcal S|=M(M+1)/2$ error classes. Hence, each greedy selection requires
$O(GM^2)$ operations, and $T$ selections require $O(TGM^2)$ operations. Therefore, the overall offline computational complexity is
$O(GN+TGM^2)$, while storing all class-wise recovery probabilities $\{q_{\ell,m,g}\}$ requires $O(GM^2)$ memory.

\section{Simulation Results}
In this section, simulation results are provided to show the impact of the proposed OVD-PSCP decoding on the BLER performance.
The codes are designed according to the 5G standard \cite{b2}.
the parameter $\epsilon$ in section IV is set to $0.02$, and the size of  $V_{\mathrm c}$ is set to $G = 2049$.
A frame is considered correctly decoded if it passes the CRC check.

For SCP, the perturbation variance follows the setting in \cite{b11} and is given by
\begin{equation}
    \sigma_p^2
    =
    \frac{1}{2R}
    10^{-\frac{\mathrm{SNR}-0.5}{10}}
    -
    \sigma^2.
\end{equation}
Note that the perturbation variance in \cite{b11} is applied to the received vector, whereas the perturbation variance in the proposed scheme is applied to the LLRs.

\subsection{Error-Correction Performance}

Fig.~1 compares the BLER performance of SCP and OVD-PSCP for $\mathcal{P}(64,32+6)$, where CA-SCL decoding with $L=4$ is included as a reference. 
The CRC polynomial is ${\rm CRC}(x)=x^6+x^5+1$.
For the same number of $T$, OVD-PSCP consistently outperforms SCP, and the performance gain becomes more pronounced as $T$ increases.
In particular, when $T=80$, OVD-PSCP achieves approximately $0.4$--$0.6$~dB gain over SCP in the BLER range from $10^{-3}$ to $10^{-4}$.
Moreover, OVD-PSCP with approximately $T=20$ achieves performance comparable to CA-SCL decoding with $L=4$.

\begin{figure}[t]
  \centering
  \includegraphics[width=1\linewidth]{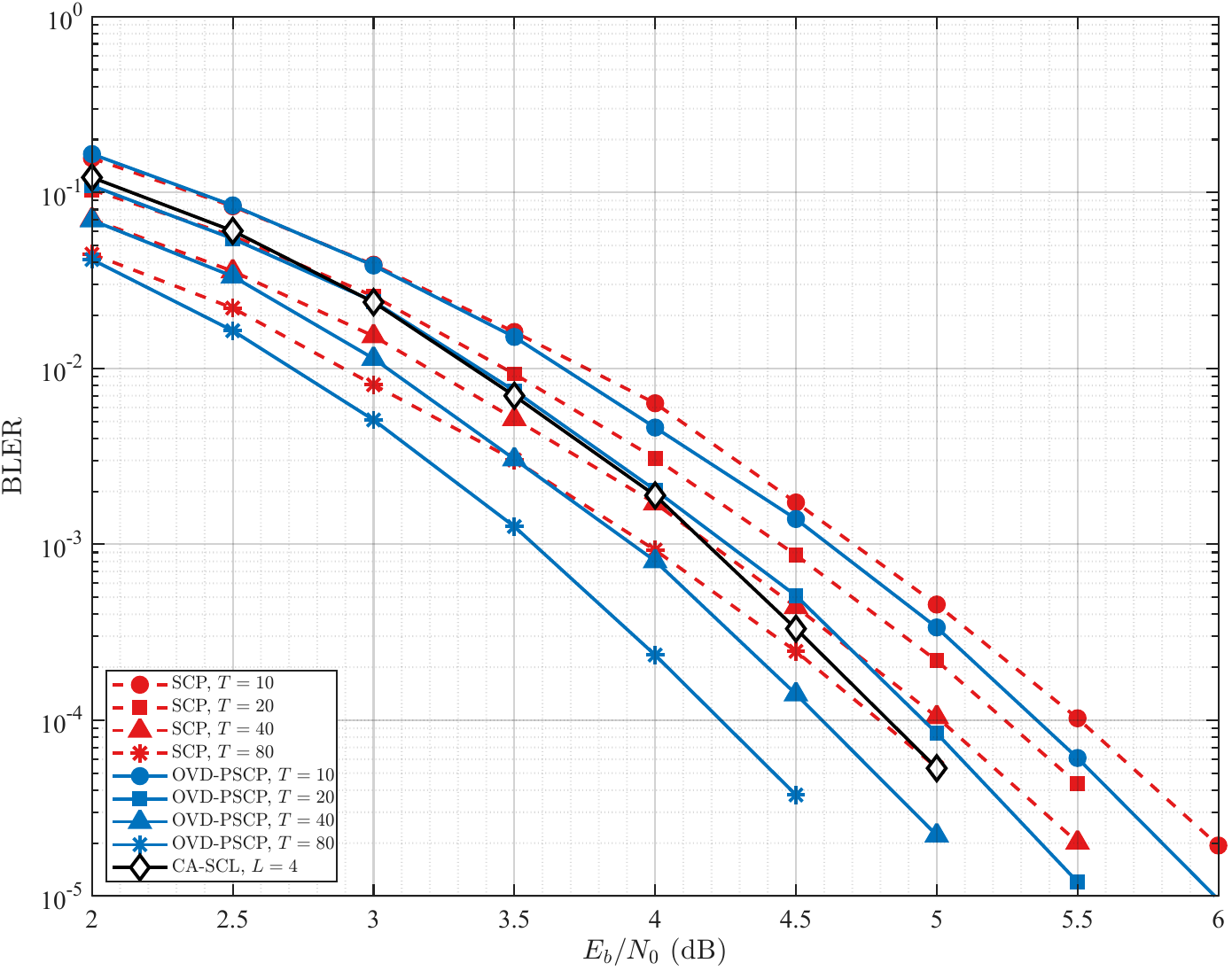}
  \caption{Performance comparison with $\mathcal{P}(64,32+6)$.}
  \label{fig:N64}
\end{figure}

\begin{figure}[t]
  \centering
  \includegraphics[width=1\linewidth]{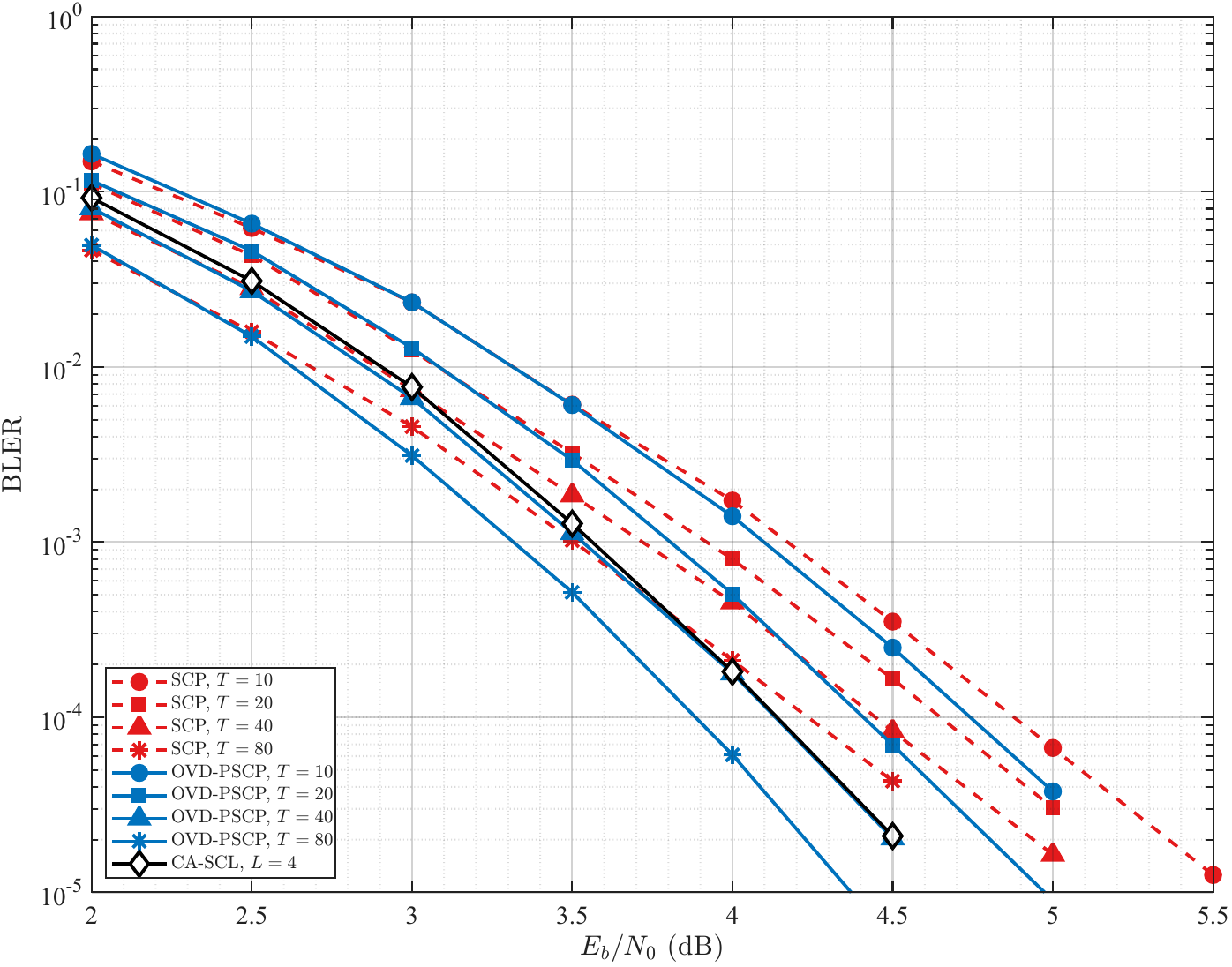}
  \caption{Performance comparison with $\mathcal{P}(128,64+8)$.}
  \label{fig:N128}
\end{figure}

\begin{figure}[t]
  \centering
  \includegraphics[width=1\linewidth]{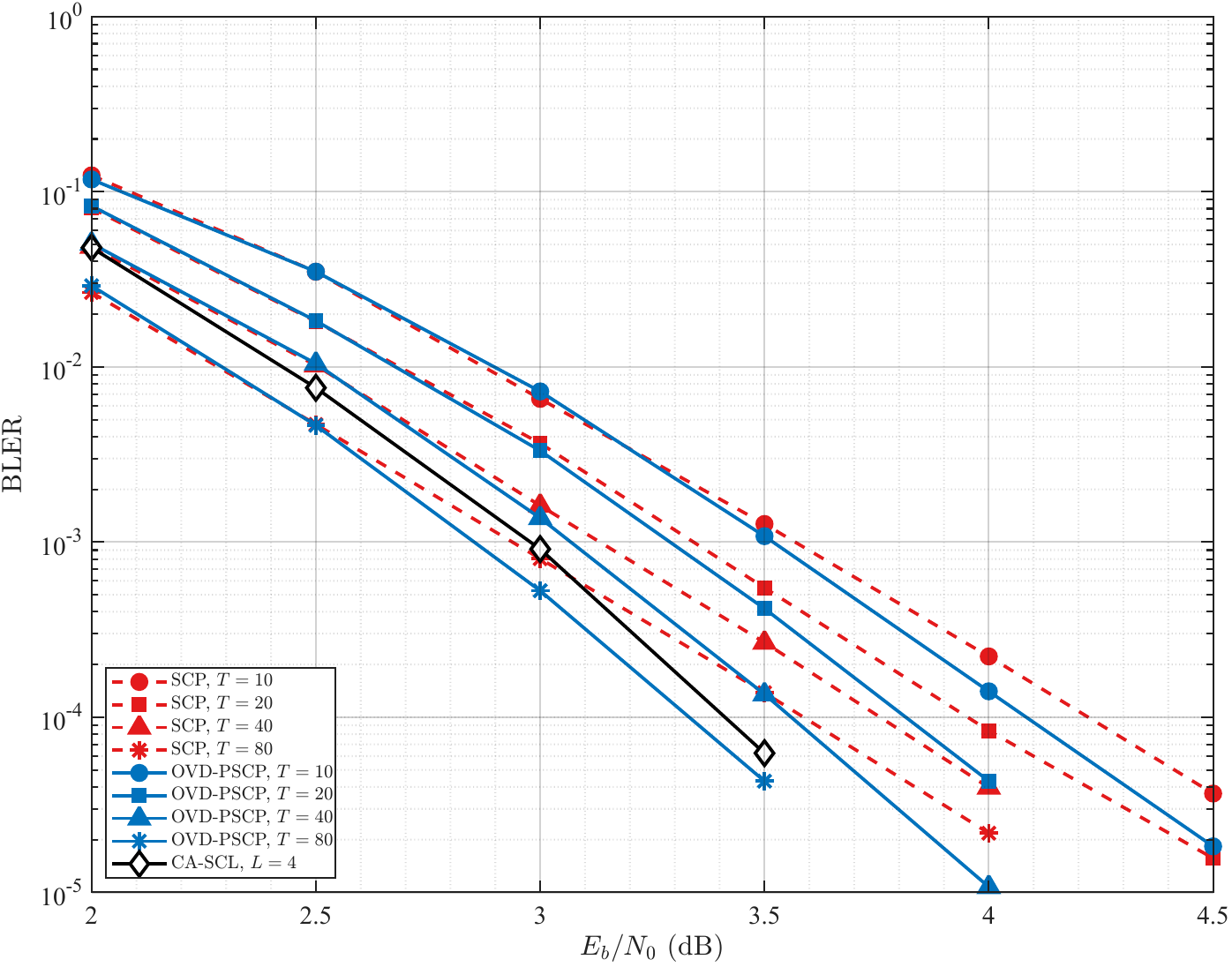}
  \caption{Performance comparison with  $\mathcal{P}(256,128+8)$.}
  \label{fig:N256}
\end{figure}

\begin{figure}[t]
  \centering
  \includegraphics[width=1\linewidth]{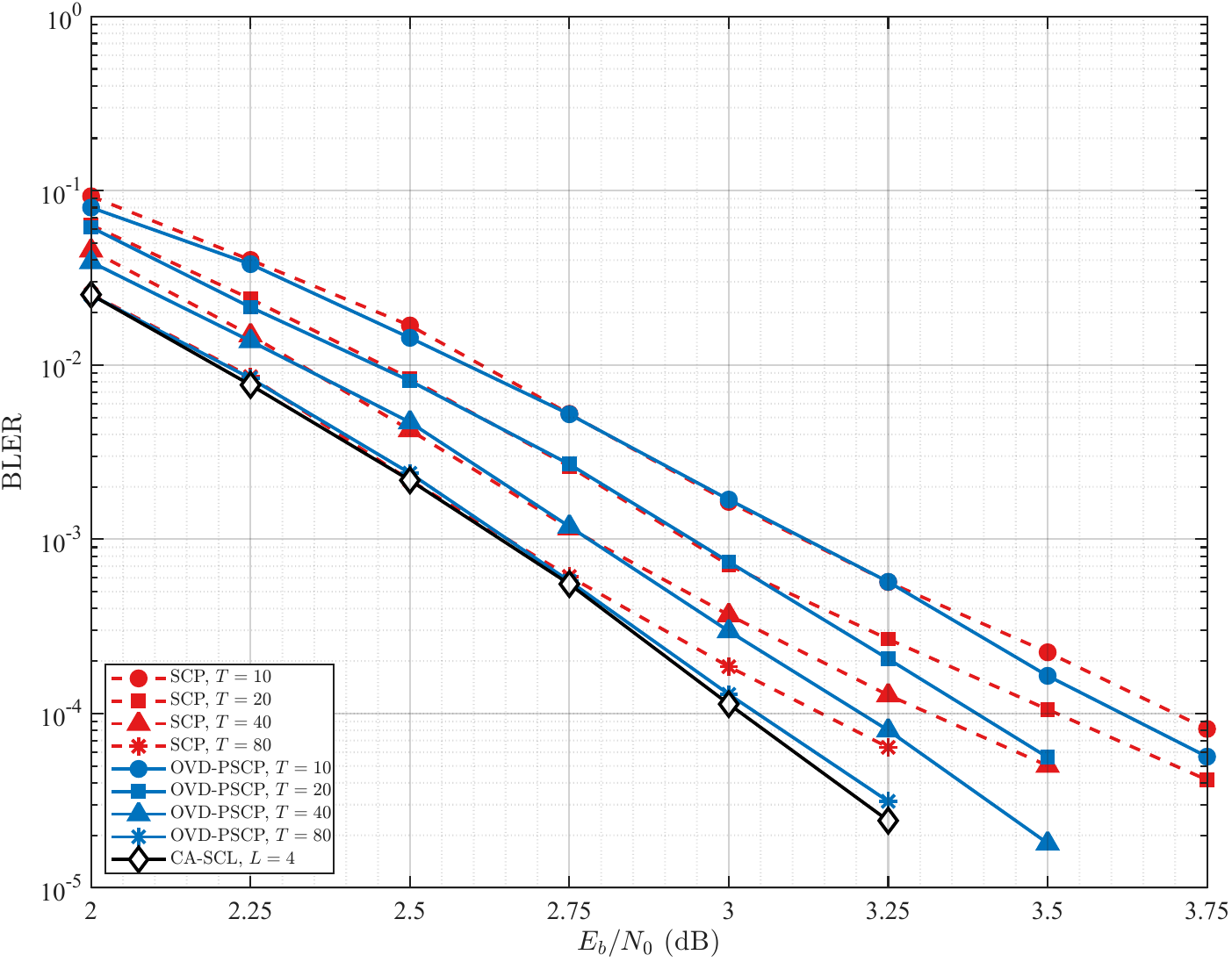}
  \caption{Performance comparison with $\mathcal{P}(512,256+11)$.}
  \label{fig:N512}
\end{figure}

Fig.~2 compares the BLER performance of SCP and OVD-PSCP for $\mathcal{P}(128,64+8)$, where CA-SCL decoding with $L=4$ is included as a reference. 
The CRC polynomial is ${\rm{CRC}}(x) = x^8 + x^2 + x + 1$.
For the same number of $T$, OVD-PSCP consistently outperforms SCP, and the performance gain becomes more pronounced as $T$ increases.
In particular, when $T=80$, OVD-PSCP achieves approximately $0.2$--$0.35$~dB gain over SCP in the BLER range from $10^{-3}$ to $10^{-4}$.
Moreover, OVD-PSCP with approximately $T=40$ achieves performance comparable to CA-SCL decoding with $L=4$.

Fig.~3 compares the BLER performance of SCP and OVD-PSCP for $\mathcal{P}(256,128+8)$, where CA-SCL decoding with $L=4$ is included as a reference. 
The CRC polynomial is ${\rm{CRC}}(x) = x^8 + x^2 + x + 1$.
For the same number of $T$, OVD-PSCP consistently outperforms SCP, and the performance gain becomes more pronounced as $T$ increases.
In particular, when $T=80$, OVD-PSCP achieves approximately $0.1$--$0.25$~dB gain over SCP in the BLER range from $10^{-3}$ to $10^{-4}$.
Moreover, OVD-PSCP with approximately $T=60$ achieves performance comparable to CA-SCL decoding with $L=4$.

Fig.~4 compares the BLER performance of SCP and OVD-PSCP for $\mathcal{P}(512,256+11)$, where CA-SCL decoding with $L=4$ is included as a reference.
The CRC polynomial is ${\rm{CRC}}(x) = {x^{11}} + {x^9}  + 1$.
For the same number of $T$, OVD-PSCP consistently outperforms SCP, and the performance gain becomes more pronounced as $T$ increases.
In particular, when $T=80$, OVD-PSCP achieves approximately $0.1$~dB gain over SCP at the BLER of $10^{-4}$.
Moreover, OVD-PSCP with approximately $T=80$ achieves performance comparable to CA-SCL decoding with $L=4$.

The above results show that OVD-PSCP achieves larger gains over SCP for short- and moderate-length polar codes, while its advantage
decreases as the code length increases. Two factors may help explain this trend.
First, the asymptotic analysis in \cite{b8} shows that, under its assumptions, the probability of perturbation introducing an error
before the original first error position tends to zero as the code length increases. 
The proposed method selects perturbation variances by considering both the probability of correcting an error, $C_{\ell}(v)$,
and the probability of introducing a new error, $D_{\ell}(v)$.
If perturbation-induced damage has a smaller overall effect for longer codes, the additional gain from reducing such damage through variance selection may also become smaller.
Second, the analytical objective in \eqref{eq:ovd_analytical_objective} uses a single recovery probability for each error class at a given perturbation variance. 
However, recovery probabilities vary across frames within the same error class.
For longer codes with stronger channel polarization, the remaining SC-failed frames
tend to be harder to recover through perturbation. The class-level approximation does not distinguish these difficult frames from easier ones in the same class, 
which can limit the effectiveness of variance selection.

\subsection{Candidate Codeword Analysis}

\begin{table*}[t]
\centering
\caption{$\mathcal{P}(64,32+6)$ Candidate Codeword Statistics}
\label{tab:n64-codeword-statistics}

{\small
\renewcommand{\arraystretch}{1.12}
\begin{tabular*}{\textwidth}
{@{\extracolsep{\fill}}ccc*{7}{c}@{}}
\toprule
\multirow{2}{*}{Perturbations}
& \multirow{2}{*}{Metric}
& \multirow{2}{*}{Method}
& \multicolumn{7}{c}{Eb/N0 (dB)} \\
\cmidrule(lr){4-10}
& & & 2.0 & 2.5 & 3.0 & 3.5 & 4.0 & 4.5 & 5.0 \\
\midrule

\multirow{4}{*}{$T=10$}
& \multirow{2}{*}{DC}
& SCP & 60.62\% & 55.99\% & 51.05\% & 45.02\% & 39.08\% & 33.03\% & 27.71\% \\
& & OVD-PSCP & 51.53\% & 53.05\% & 54.91\% & 57.04\% & 58.45\% & 60.13\% & 61.80\% \\
\cmidrule(lr){2-10}
& \multirow{2}{*}{EML}
& SCP & 12.76\% & 11.88\% & 10.89\% & 9.78\% & 8.25\% & 7.00\% & 5.79\% \\
& & OVD-PSCP & 12.10\% & 11.52\% & 11.15\% & 10.75\% & 9.47\% & 8.60\% & 7.69\% \\
\midrule

\multirow{4}{*}{$T=20$}
& \multirow{2}{*}{DC}
& SCP & 53.43\% & 48.31\% & 43.01\% & 37.24\% & 31.33\% & 25.66\% & 20.90\% \\
& & OVD-PSCP & 46.27\% & 48.77\% & 49.91\% & 52.33\% & 53.62\% & 55.68\% & 57.08\% \\
\cmidrule(lr){2-10}
& \multirow{2}{*}{EML}
& SCP & 9.84\% & 9.07\% & 8.20\% & 7.03\% & 5.83\% & 4.75\% & 3.84\% \\
& & OVD-PSCP & 9.51\% & 9.18\% & 8.54\% & 8.11\% & 7.38\% & 6.82\% & 6.18\% \\
\midrule

\multirow{4}{*}{$T=40$}
& \multirow{2}{*}{DC}
& SCP & 44.99\% & 39.85\% & 34.23\% & 28.96\% & 23.70\% & 18.79\% & 14.83\% \\
& & OVD-PSCP & 42.67\% & 45.08\% & 47.62\% & 49.55\% & 51.72\% & 54.13\% & 55.53\% \\
\cmidrule(lr){2-10}
& \multirow{2}{*}{EML}
& SCP & 7.07\% & 6.41\% & 5.49\% & 4.69\% & 3.69\% & 2.95\% & 2.24\% \\
& & OVD-PSCP & 6.75\% & 6.78\% & 6.34\% & 5.93\% & 5.36\% & 5.17\% & 4.48\% \\
\midrule

\multirow{4}{*}{$T=80$}
& \multirow{2}{*}{DC}
& SCP & 35.16\% & 30.83\% & 26.31\% & 21.63\% & 17.10\% & 13.16\% & 10.02\% \\
& & OVD-PSCP & 38.95\% & 41.46\% & 44.31\% & 46.41\% & 48.90\% & 50.76\% & 52.62\% \\
\cmidrule(lr){2-10}
& \multirow{2}{*}{EML}
& SCP & 4.52\% & 4.01\% & 3.50\% & 2.93\% & 2.28\% & 1.72\% & 1.28\% \\
& & OVD-PSCP & 4.72\% & 4.49\% & 4.46\% & 4.00\% & 3.76\% & 3.38\% & 3.13\% \\
\bottomrule
\end{tabular*}
}
\end{table*}

To facilitate a fair comparison between the proposed scheme and the typical scheme,
based on 10000 error frames, we collect the following statistics: (i) the proportion of distinct codewords among all candidate codewords generated over all perturbation branches,
and (ii) the proportion of perturbation branches in which the maximum likelihood (ML) metric \cite{b13} of distinct codewords exceeds that of the codeword produced by the original SC decoder. For convenience, we denote the result in (i) by DC, and denote the result in (ii) by EML.

DC reflects the diversity of candidates available for CRC checking. EML indicates how often perturbation produces
candidates that are more likely to be correct than the original SC output, as judged by the ML metric.

Table~I presents the candidate codeword statistics for $\mathcal{P}(64,32+6)$ under different SNRs and perturbation branches. 
As the SNR increases, the DC of SCP decreases, whereas that of OVD-PSCP increases. At SNRs of $3$~dB and above, OVD-PSCP achieves higher DC and EML than SCP for all listed
values of $T$. For example, with $T=80$ at an SNR of $5$~dB, OVD-PSCP achieves a DC of $52.62\%$ and an EML of $3.13\%$,
compared with $10.02\%$ and $1.28\%$, respectively, for SCP. 
At lower SNRs, the comparison between the two schemes depends on the number of perturbation attempts. For example, at an
SNR of $2$~dB, OVD-PSCP yields lower DC and EML than SCP for $T=10$, $20$, and $40$. With $T=80$, however, OVD-PSCP
achieves higher values for both metrics: its DC and EML reach $38.95\%$ and $4.72\%$, respectively, compared with
$35.16\%$ and $4.52\%$ for SCP.
Overall, these statistics show that OVD-PSCP generates more distinct candidates and more frequently improves the ML metric at higher SNRs, which is
consistent with the BLER gains shown in Fig.~1.

\section{Conclusion}

In this paper, we proposed an offline variance design method for PSCP decoding of short- and medium-length polar codes. An analytical recovery model was developed to guide the selection of perturbation variances, and a greedy algorithm was used to determine the variances for a fixed number of branches. Since all variances are determined before decoding, the original SC branch and all perturbation branches can start simultaneously.

Simulation results for polar codes with lengths $64$, $128$, $256$, and $512$ showed that OVD-PSCP achieves lower BLER than conventional SCP with the same number of perturbation branches. The improvement is more pronounced for shorter codes and larger numbers of perturbation branches. 
Future work will study the recovery difficulty of SC-failed frames within the same error class at different SNRs and code lengths. This may provide more detailed guidance for perturbation variance design.


\begin{thebibliography}{00}
\bibitem{b1} 	E. Arıkan, “Channel polarization: A method for constructing capacityachieving codes for symmetric binary-input memoryless channels,” \emph{IEEE Trans. Inf. Theory}, vol. 55, no. 7, pp. 3051–3073, Jul. 2009.
\bibitem{b2} 	3rd Generation Partnership Project (3GPP), ``NR; Multiplexing and channel coding,'' 3GPP TS 38.212, Release 15, 2018.
\bibitem{b3}	J. Tong, X. Wang, Q. Zhang, H. Zhang, R. Li, and J. Wang, “Fast Polar Codes for Terabits-Per-Second Throughput Communications,” \emph{Communications of HUAWEI RESEARCH}, pp. 110–121, Sep. 2022.
\bibitem{b4} 	I. Tal and A. Vardy, “List decoding of polar codes,” \emph{IEEE Trans. Inf. Theory}, vol. 61, no. 5, pp. 2213–2226, May 2015.
\bibitem{b5} 	K. Niu and K. Chen, “CRC-aided decoding of polar codes,” \emph{IEEE Commun. Lett.}, vol. 16, no. 10, pp. 1668–1671, Oct. 2012.
\bibitem{b6}	G. Nana, S. Zhao, and L. Kong, “CRC-aided perturbed decoding of polar codes,” in \emph{Int. Conf. on Wireless Commun., Networking and Mobile Comput. (WiCOM)}, 2018.
\bibitem{b7} 	D. Xiao and Z. Gu, “Dynamic perturbation decoding of polar-CRC cascaded code,” in \emph{Int. Wireless Commun. and Mobile Comput. (IWCMC)}, 2020.
\bibitem{b8} 	Z. Liu, L. Yao, S. Yuan, G. Yan, Z. Ma, and Y. Liu, “Performance Analysis of Perturbation-Enhanced SC Decoders,” \emph{IEEE Commun. Lett.}, vol. 29, no. 3, pp. 507–511, Mar. 2025.
\bibitem{b9}	C. Pillet, I. Sagitov, D. Deslandes, and P. Giard, “Successive-Cancellation Flip and Perturbation Decoder of Polar Codes,” in \emph{Proc. 2025 IEEE Wireless Communications and Networking Conference (WCNC)}, 2025.
\bibitem{b10} 	Z. Yang, L. Chen, K. Qin, X. Wang, and H. Zhang, “Perturbation-Based Decoding Schemes for Long Polar Codes,” in \emph{Proc. 2025 IEEE International Symposium on Information Theory (ISIT)}, 2025.
\bibitem{b11}    Z. Yang, L. Chen, K. Qin, X. Wang, and H. Zhang, ``Improved successive cancellation decoding of polar codes through  perturbing a posteriori LLRs,''
in \emph{Proc. 2025 IEEE Information Theory Workshop (ITW)},  2025.
\bibitem{b12}	P. Trifonov, “Efficient design and decoding of polar codes,” \emph{IEEE Trans.Commun.}, vol. 60, no. 11, pp. 3221–3227, Nov. 2012.
\bibitem{b13}   Y. S. Han, C. R. P. Hartmann, and C.-C. Chen, “Efficient priority-first search maximum-likelihood soft-decision decoding of linear block codes,” \emph{IEEE Trans. Inf. Theory}, vol. 39, no. 5, pp. 1514–1523, Sep. 1993.
    
    
    
\end{thebibliography}
\end{document}